\theoremstyle{plain}
\newtheorem{theorem}{Theorem}
\newtheorem{corollary}[theorem]{Corollary}
\theoremstyle{remark}
\theoremstyle{definition}
\begin{document}

\title{Guaranteed convergence for a class of coupled-cluster methods based on Arponen's extended theory}

\author{Simen Kvaal}
\email{simen.kvaal@kjemi.uio.no}
\author{Andre Laestadius}
\author{Tilmann Bodenstein}
\affiliation{Hylleraas Centre for Quantum Molecular Sciences, Department of Chemistry, University of Oslo, P.O. Box 1033 Blindern, N-0315 Oslo, Norway }
\date{Friday March 13, 2020}

\begin{abstract} 
A wide class of coupled-cluster methods is introduced, based on Arponen's extended coupled-cluster theory. This class of methods is formulated in terms of a coordinate transformation of the cluster operators. The mathematical framework for the error analysis of coupled-cluster methods based on Arponen's bivariational principle is presented, in which the concept of local strong monotonicity of the flipped gradient of the energy is central. A general mathematical result is presented, describing sufficient conditions for coordinate transformations to preserve the local strong monotonicity. The result is applied to the presented class of methods, which include the standard and quadratic coupled-cluster methods, and also Arponen's canonical version of extended coupled-cluster theory. Some numerical experiments are presented, and the use of canonical coordinates for diagnostics is discussed.
\end{abstract}

\pacs{To be added} 
\keywords{Coupled-cluster method, extended coupled-cluster method, error analysis, electronic-structure theory}

\maketitle

\section{Introduction}

It is with delight that the authors dedicate this work to Professor J\"urgen Gau\ss{} on the occasion of his
sixtieth birthday. In the spirit of his pursuit of scientific rigor, especially the attention to detail in coupled-cluster (CC) theory, we here present a mathematical study of some alternative formulations based on Arponen's extended CC (ECC) method~\cite{Arponen1983,Arponen1987}. 

Our perspective and our assumptions are natural for the
electronic structure problem of molecules
in the Born--Oppenheimer approximation, but should also be useful in a
more general setting. The collection of methods is defined by substitution of the dual exponential $e^{\Lambda^\dag}$ by a Taylor polynomial of fixed degree $n$ in the exact ECC energy functional in canonical (C) and non-canonical (NC) coordinates introduced by Arponen (see Eqs.~\eqref{eq:NC-ECC} and \eqref{eq:C-ECC}).  This can be viewed as a coordinate transformation, and leads to two hierarchies of models, the NC-ECC$(n)$ class using non-canonical coordinates, and the C-ECC$(n)$ class using canonical coordinates (see Eq.~\eqref{eq:hierarchy}).  Our mathematical results imply that when the cluster operators in the energy functionals are not truncated, all these models are exact and equivalent to the Schr\"odinger equation. Moreover, Galerkin approximations (i.e., generic truncation schemes that can approach the untruncated limit) will converge under certain relatively mild single-reference type conditions.

The various forms of CC methods are today among the
most widely used for wavefunction-based calculations on manybody
systems. The main idea stems from Hubbard's exponential
parameterization of the wavefunction based on cluster
operators in manybody perturbation theory~\cite{Hubbard1957}, which was
taken as starting point for \emph{ab initio} treatments by Coester and
K\"ummel for nuclear structure calculations in the
1950s~\cite{Coester1958,Coester1960}. The modern form of standard
CC theory was developed by, among others, Sinano\u{g}lu,
Paldus and Cizek in the 1960s~\cite{Paldus2005} and the CC
method with singles, doubles and perturbative triples [CCSD(T)]
today constitutes ``the gold standard of quantum chemistry'' due to
its excellent balance between computational cost and
accuracy~\cite{Bartlett2007}. In nuclear structure calculations the
same method has gained traction in the last decade, providing
excellent predictive power for light to medium
nuclei~\cite{Hagen2014}. Coupled-cluster theory has also been applied to
superconductivity~\cite{Emrich1984}, lattice gauge
theory~\cite{McKellar2000}, and systems of trapped bosons such as
Bose--Einstein condensates~\cite{Cederbaum2006}. These examples and
the cited works are by no means exhaustive, but serve to illustrate
the flexibility of the CC formalism.

In the early 1980s, Arponen introduced a novel concept into CC theory, namely
\emph{the bivariational principle}~\cite{Arponen1982,Arponen1983}, resulting in
the ECC method~\cite{Arponen1983,Arponen1987,Arponen1987b}, and an interpretation of standard CC theory and ECC theory as
variational methods in a more general sense, i.e., they are bivariational. Today, the view of the CC energy functional as a Lagrangian is standard in quantum chemistry~\cite{Helgaker1988}. 

However, the ECC method has seen little use in chemistry due to its
immense complexity, even for truncated versions. In physics, on the
other hand, the ECC model has advantages over standard CC theory that
can make it very useful. To illustrate, the ECC method correctly
describes symmetry breaking in the Lipkin--Meshkov--Glick quasispin
model of collective monopole vibrations in
nuclei~\cite{Arponen1982,Bishop1991}, in contrast to the standard CC
method, which cannot. For the electronic-structure problem in quantum
chemistry, the standard CC model fails dramatically to reproduce
dissociation curves of even simple dimers like $\mathrm{N}_2$, while the ECC method performs quite well~\cite{Cooper2010,Evangelista2011}, as do the quadratic CC doubles model introduced by Van Voorhis and Head--Gordon~\cite{VanVorhiis2000a,Byrd2002}.
The latter approach has asymptotic cost similar to standard CC with singles and doubles (CCSD). 
Thus, we conclude that the ECC method is still worthwhile to study, and approximate forms such as studied in this article, may still prove to be useful.

The non-canonical and canonical hierarchies (N)C-ECC($n$) introduced in this article turn out to be equivalent, and give identical predictions, when truncated with an excitation-rank complete scheme. On the other hand, the working equations are different and in fact cheaper in the canonical case, albeit marginally. An example is the NC-ECC(1)SD method, i.e., the standard CCSD approach, and the C-ECC(1)SD method, which are equivalent. We also raise the question about diagnostics for practical calculations, and show some numerical evidence that diagnostics can favorably be done using canonical coordinates, even if the computations are done in the usual manner using noncanonical variables.

Another well-known special case is NC-ECC(2)D, the quadratic coupled-cluster (QCC) method~\cite{VanVorhiis2000a,Byrd2002}, which is also equivalent in the canonical and non-canonical versions. Furthermore, the  perfect-pairing (PP) hierarchy~\cite{Lehtola2016} of amplitude truncation schemes can be applied to our methods. The PP hierarchy are approximations to the complete-active space self-consistent field (CASSCF) method, including only a tiny subset of even-rank amplitudes combined with orbital-optimization, the latter which we disregard here. The corresponding canonical and non-canonical formulations (N)C-ECC($1$)PPH are inequivalent. The $n>1$ versions could also be interesting in their own right, as investigated by Byrd and coworkers in the case of QCC~\cite{Byrd2002}.

The remainder of the article is organized as follows: In Section~\ref{sec:ncECC}, we introduce the bivariational principle and the mathematical setting of local analysis of CC methods.
The key concept of our analysis is the notion of local strong monotonicity of the flipped gradient of a smooth bivariational energy functional (see Eq.~\eqref{eq:flipped-gradient}).
The usefulness of this property is presented in Theorem~\ref{thm:z}, where local uniqueness and quadratic error estimates are established in a very general setting using Zarantonello's Theorem from nonlinear monotone operator theory~\cite{Zarantonello1960,Zeidler1990}. Next, Theorem~\ref{thm:NC-ECCmono} summarizes the main results of Ref.~\onlinecite{Laestadius2018}, where strong montonicity is proven for the non-canonical ECC method. For a recent review on monotonicity in CC theory we refer to \cite{Laestadius2019}, where this property is linked to spectral gaps of the systems under study. Section \ref{sec:coordinate-transformations} presents 
the idea of monotonicity-preserving coordinate transformations. Our main result, Theorem~\ref{thm:map}, is a change-of-coordinates result. When combined with Theorems~\ref{thm:z} an \ref{thm:NC-ECCmono}, the analysis of (N)C-ECC($n$) follows in Corollary~\ref{cor:methods}. Our tools rely heavily on the functional analytic formulation of cluster operators and the Schr\"odinger equation developed by Rohwedder and Schneider~\cite{Schneider2009,Rohwedder2013,Rohwedder2013b}. 
In Section~\ref{sec:numerics}, we perform some numerical experiments to elucidate some aspects of the (N)C-ECC($n$) hierarchies, before we finish with some concluding remarks in Section~\ref{sec:conclusion}.

\section{The non-canonical extended coupled-cluster model}
\label{sec:ncECC}
\subsection{Bivariational principle}

The starting point is a generalization of
the Rayleigh--Ritz variational principle to operators that are not
\emph{necessarily} self-adjoint (Hermitian in the finite-dimensional
case). For simplicity, we assume a real Hilbert space $\mathcal{H}$. Given a system Hamiltonian $\hat{H} : D(\hat{H}) \to \mathcal{H}$, where $D(\hat{H})\subset\mathcal{H}$ is dense, we define a bivariate Rayleigh quotient, 
$\mathcal{E}_\mathrm{bivar}: \mathcal H \oplus \mathcal H \to \mathbb R$, 
\begin{equation}
  \mathcal{E}_\text{bivar}(\psi,\tilde\psi) =
  \frac{\braket{\tilde\psi,\hat{H}\psi}}{\braket{\tilde\psi,\psi}},
  \quad \braket{\tilde\psi,\psi}\neq 0. \label{eq:quotient}
\end{equation}
Requiring the functional $\mathcal{E}_\text{bivar}$ to be stationary at $(\psi_*,\tilde{\psi}_*)$ with respect
to arbitrary variations in the two wavefunctions leads to the
conditions $\braket{\tilde{\psi}_*,\psi_*}\neq 0$, $\hat{H}\psi_*=E_*\psi_*$ and
$\hat{H}^\dag\tilde{\psi}_*=E_*\tilde{\psi}_*$, with $E_* = \mathcal{E}_\text{bivar}(\psi_*,\tilde{\psi}_*)$, i.e., the right and left eigenvalue problem for $\hat{H}$. If $\hat{H}$ is self-adjoint, 
the eigenfunctions are identical up to normalization.
The introduction of two independent wavefunctions therefore might seem to complicate matters. However, the bivariate Rayleigh quotient
$\mathcal{E}$ allows \emph{distinct} approximations of $\psi$ and
$\tilde{\psi}$, introducing more flexibility for approximate
schemes. Moreover, the \emph{state} defined is a (non-Hermitian) density operator, which is unique,
\begin{equation*}
    \rho = \frac{\ket{\psi}\bra{\tilde{\psi}}}{\braket{\tilde{\psi}|\psi}}.
\end{equation*}
When determined variationally, the Hellmann--Feynman theorem~\cite{Feynman1939} gives well-defined physical predictions in terms of $\rho$.

As is common in analysis of partial differential equations~\cite{ReedSimonI,Schmuedgen2012}, we pass to a weak formulation, which in this case is \emph{equivalent} to the strong formulation outlined above. Under the assumption that $\hat{H}$ is below bounded, we can introduce a unique extension $H : \mathcal{X} \to \mathcal{X}'$ (dual space), where $\mathcal{X} \subset \mathcal{H}$ is a dense
subspace, a Hilbert space with norm $\Vert \cdot \Vert_\mathcal{X}$, continuously embedded in $\mathcal{H}$. It follows that $\mathcal{H}$ is continuously
embedded in $\mathcal{X}'$, and we have a scale
of spaces with dense embeddings, $\mathcal{X}\hookrightarrow
\mathcal{H}\hookrightarrow \mathcal{X}'$. The operator
$H$ is bounded (i.e., continuous), and satisfies a Gårding estimate, i.e., for some $\alpha\geq 0$ and some $\mu \in \mathbb{R}$,
\[ \braket{\psi,H\psi} \geq \alpha \|\psi\|_{\mathcal{X}}^2 + \mu \|\psi\|^2 \]
for all $\psi \in \mathcal{X}$. For the electronic-structure problem $\mathcal{X}$ can be taken to be the space of functions with finite kinetic energy.

If $\mathcal{H}$ is finite-dimensional, we can set
$\mathcal{X}\equiv \mathcal{H}$, simplifying matters a lot, and the reader may
if she or he wishes stick to this picture for simplicity, where
all operators are basically matrices. In the infinite
dimensional case, however, $\hat{H}$ is typically unbounded as an operator over
$\mathcal{H}$, and the above construction is necessary.

Under the above stated conditions, $\mathcal{E}_\text{bivar} :
\mathcal{X}\oplus\mathcal{X}\to\mathbb{R}$ is a (Fréchet) smooth map away from the singularity $\braket{\tilde{\psi},\psi}=0$, and differentiation and Taylor series exist and converge locally, allowing a certain degree of intuition to be borrowed from the finite-dimensional case. The right and left Schrödinger equations are then $\partial_{\tilde{\psi}}\mathscr{E}_\text{bivar}(\psi_*,\tilde{\psi}_*) = 0$ and $\partial_{\psi}\mathscr{E}_\text{bivar}(\psi_*,\tilde{\psi}_*) = 0$, respectively. This is the bivariational principle.

\subsection{Exponential ansatz and the ECC method}

The standard CC method is formulated relative to a fixed
reference $\phi_0\in \mathcal{X}$ on determinantal form, and by introducing a cluster
operator ${T} = {T}_1 + {T}_2 + \cdots$ with ${T}_k$
containing all excitations of rank $k$, i.e., of $k$ fermions relative to $\phi_0$, we
have the exact parameterization
\[ \psi = e^{T}\phi_0 , \]
assuming intermediate normalization, $\braket{\phi_0,\psi} =
1$.

Since all excitations commute, the cluster operators form a commutative Banach algebra under suitable conditions which we now describe~\cite{Schneider2009,Rohwedder2013}. We expand the cluster operators using amplitudes and basis operators, i.e., $T = \sum_{\mu\in \mathcal{I}} \tau_\mu X_\mu$, where $X_\mu$ excites a number $n = n(\mu)$ of fermions in the reference into the virtual space, i.e., 
\[ X_\mu = c_{a_1}^\dag c_{i_1} \cdots c_{a_n}^\dag c_{i_n}, \]
where the $i_k$ are among the occupied orbitals of $\phi_0$, and $a_k$ among the unoccupied orbitals. The set $\mathcal{I}$ is the generic set of amplitude indices. We introduce a Hilbert space $\mathcal{V}$ with norm $\|T\| = \|T\phi_0\|_{\mathcal{X}}$, which becomes a useful space for formulating abstract CC theory. Fundamental results include that any $T\in \mathcal{V}$ is a bounded operator on $\mathcal{X}$, such that, e.g., $\exp(T)$ also is a bounded operator. Moreover, $T^\dag$ is also a bounded operator, which means that we can make sense of, e.g., $\exp(-T)H\exp(T)$, and that we can represent any intermediately normalized $\psi\in \mathcal{X}$ as $\psi = e^T\phi_0$ with $T\in \mathcal{V}$ unique. Finally, all the elements of the algebra are nilpotent. The Banach algebra structure on $\mathcal{V}$ allows CC theory to be rigorously formulated in the full, infinite-dimensional case. This was the approach taken  in Ref.~\onlinecite{Laestadius2018} for a first analysis of NC-ECC theory.

Again, the finite-dimensional case may by kept in mind: In this case, cluster amplitudes are simply finite-dimensional vectors, and the existence of the exponential parameterization is a trivial result. There is no need to introduce the norm $\|T\|$, instead the Euclidean norm on the amplitudes may be used.

Any $\tilde{\psi}$ normalized according to
$\braket{\tilde{\psi},\psi} = 1$ can be represented by introducing a
second cluster operator ${\Lambda} = {\Lambda}_1 +
{\Lambda}_2 + \cdots$, viz.,
\[ \tilde\psi = e^{-{T}^\dag} e^{\Lambda}\phi_0 . \]
Plugging into the bivariate Rayleigh quotient, we obtain the energy functional $\mathcal{E}_\mathrm{NC-ECC} : \mathcal{V}\oplus\mathcal{V}\to\mathbb{R}$ of the non-canonical ECC method, given by
\begin{equation}
  \mathcal{E}_{\text{NC-ECC}}(T,\Lambda) =
  \braket{\phi_0,e^{\Lambda^\dag} e^{-{T}} {H} e^{T}\phi_0}. \label{eq:NC-ECC}
\end{equation}
This map is everywhere smooth, and its critical points $(T_*,\Lambda_*)$ are equivalent to the Schrödinger equation and its dual: Under the assumption that the eigenfunctions can be normalized according to $\braket{\phi_0,\psi_*}=\braket{\tilde{\psi}_*,\psi_*}=1$, $\psi_*$ and $\tilde{\psi}_*$ solve the Schr\"odinger equation and its dual if and only if
\begin{equation}
    \frac{\partial{\mathcal{E}_\text{NC-ECC}(T_*,\Lambda_*)}}{\partial \Lambda} = 0 \quad\text{and}\quad \frac{\partial{\mathcal{E}_\text{NC-ECC}(T_*,\Lambda_*)}}{\partial T} = 0. \label{eq:ampeq}
\end{equation}
Assuming that the eigenvalue $E_* = \mathcal{E}(T_*,\Lambda_*)$ is nondegenerate, $(T_*,\Lambda_*)$ is easily seen to be locally unique.

\subsection{Truncations and monotonicity analysis}

The NC-ECC energy is just one out of many possible parameterizations of the exact bivariate Rayleigh quotient $\mathcal{E}_\text{bivar}$. In this section, we take a more abstract approach and consider a general energy functional $\mathcal{E} : \mathcal{V}\oplus\mathcal{V} \to \mathbb{R}$, obtained by some exact parameterization of $(\psi,\tilde{\psi})$ by means of the space $\mathcal{V}\oplus\mathcal{V}$, i.e., by a pair of cluster operators $(T,\Lambda)$. We will discuss several such functionals in Sec.~\ref{sec:coordinate-transformations}, obtained from the NC-ECC functional by coordinate transformations.

Only in rare cases can the amplitude equations~\eqref{eq:ampeq} be solved exactly. Introduce therefore a discretized space $\mathcal{V}_d \subset \mathcal{V}$ of finite dimension by truncating the amplitude index set $\mathcal{I}_d\subset \mathcal{I}$, that is, $T_d\in \mathcal{V}_d$ if and only if
\begin{equation}
    T_d = \sum_{\mu\in\mathcal{I}_d} \tau_{d,\mu} X_\mu \in \mathcal{V}_d. 
\end{equation} 
The set $\mathcal{I}_d$ is typically defined by the restriction of the excitations to a finite virtual space (a finite basis), and to a finite excitation rank.  In the chemistry literature, the excitation hierarchy for a given basis is traditionally denoted singles (S), doubles (D), and so on. In the ECC literature, one typically speaks of the SUB$n$ approximation, with $n$ being the maximum rank. 

When the discrete space is established, we define a discrete solution by the stationary conditions of the restricted energy function $\mathcal{E}_d = \mathcal{E}\restriction_{\mathcal{V}_d\oplus\mathcal{V}_d}$. 
The stationary equations take the form
\begin{equation}
  \frac{\partial{\mathcal{E}(T_{d*},\Lambda_{d*})}}{\partial \lambda_\mu} = \frac{\partial{\mathcal{E}(T_{d*},\Lambda_{d*})}}{\partial \tau_\mu} = 0, \label{eq:ampeq-h}
\end{equation}
for all $\mu \in \mathcal{I}_d$. 

It is not \emph{necessary} to use the traditional truncation scheme outlined here; any increasing sequence of subspaces $\mathcal{V}_d\subset\mathcal{V}$, with $d$ a parameter, that can approximate elements in $\mathcal{V}$ arbitrarily well by increasing $d$ can be used. We let $\mathrm{dist}(v,\mathcal{V}_d)$ be the distance from $v$ to $\mathcal{V}_d$ measured with respect to the norm of $\mathcal{V}$. Consequently, for all $v\in \mathcal V$ we have $\mathrm{dist}(v,\mathcal V_d) \to 0$ as $d\to+\infty$.
Such a sequence of spaces is referred to as a Galerkin sequence. Other options than the traditional truncation schemes are explicitly correlated methods \cite{Hattig2012} and complete-active space methods \cite{Adamowicz2000,Kowalski2018,Lehtola2016} such as the PP hierarchy.

An often overlooked point in the physics literature is the fact that convergence of the \emph{equations} does not in general imply convergence of their \emph{solutions}. An important question is therefore whether the discrete critical points $(T_{d*},\Lambda_{d*})$ converge to the exact critical points $(T_*,\Lambda_*)$ as $d\to +\infty$. This would imply that the energy converges too, and in a quadratic manner due to the critical point formulation.

\emph{Monotonicity} is an important notion in connection to the local analysis of the CC method and its variations~\cite{Schneider2009,Rohwedder2013,Rohwedder2013b,Laestadius2018,faulstich2018,Laestadius2019}. The use of montonicity in the analysis of the standard CC method was introduced by Schneider and Rohwedder~\cite{Schneider2009,Rohwedder2013b}. It allows the establishment of locally unique solutions of the Galerkin problem and is therefore important for the motivation of numerical implementations. As such it is a fundamental result of the CC method's practical usage in quantum chemistry. It also connects spectral gaps, e.g., HOMO-LUMO gap, to stability constants within the analysis~\cite{Laestadius2019}. (See also the steerable CAS-ext gap connected to the tailored CC method~\cite{Kinoshita2005} that treats quasi-degenerate systems~\cite{faulstich2019numerical}.)

The particular monotonicity property that is key for this presentation is as follows: A function $F: \mathcal{V}\oplus\mathcal{V} \to \mathbb {\mathcal{V}'\oplus\mathcal{V}'}$, $Z \mapsto F(Z)$ is locally strongly monotone at $Z_*$ if there is an open ball $U \subset \mathcal{V}\oplus\mathcal{V}$ containing $Z_*$, such that for all $Z_1,Z_2\in U$, we have
\begin{equation}
    \braket{F(Z_1) - F(Z_2), Z_1-Z_2} \geq \eta \|Z_1-Z_2\|^2,
    \label{eq:mono1}
\end{equation}
for some constant $\eta>0$. (Here, the bracket is the dual pairing of $\mathcal{V}'\oplus\mathcal{V}'$ and $\mathcal{V}\oplus\mathcal{V}$. 

Furthermore, we need the concept of Lipschitz continuity: $F$ is locally Lipschitz with constant $L> 0$ if
\begin{align*}
    \Vert F(Z_1) - F(Z_2) \Vert \leq L \Vert Z_1-Z_2 \Vert .
\end{align*}
In particular, any (Fréchet) smooth function is locally Lipschitz continuous, and so are all its derivatives.

The map $F$ that we will study is the \emph{flipped gradient} of the general energy functional $\mathcal{E}:\mathcal{V}\oplus\mathcal{V}\to\mathbb{R}$, defined as
\begin{equation} 
    F(T,\Lambda) = (\partial_\Lambda \mathcal{E}(T,\Lambda), \partial_T \mathcal{E}(T,\Lambda)), 
    \label{eq:flipped-gradient}
\end{equation}
or more compactly $F(T,\Lambda) = R\partial \mathcal{E}(T,\Lambda)$, with $R$ being the map that exchanges the partial derivatives. The motivation is as follows: If we consider the bivariate Rayleigh quotient, $\partial \mathcal{E}_\text{bivar}$ is \emph{not} locally strongly monotone, as  its critical points are saddle points. On the other hand, the flipped gradient $F_\text{bivar} = R\partial\mathcal{E}_\text{bivar}$ can be seen to be locally strongly monotone near the \emph{ground state}, given that this ground state is non-degenerate with a finite spectral gap to the remaining spectrum. It is natural to expect that one can find conditions such that the flipped gradient of the energy when expressed in new coordinates is locally strongly monotone. 

The following is a central result, combining a result due to Zarantonello~\cite{Zarantonello1960,Zeidler1990} (points 1.~and 2.), adapted to the present notation and setting, and applied to the flipped gradient of an energy functional (point 3.).
\begin{theorem}\label{thm:z}
    Let $F :\mathcal{V}\oplus\mathcal{V}\to\mathcal{V}'\oplus\mathcal{V}'$ be a map, and let $U\subset{\mathcal{V}\oplus\mathcal{V}}$ be an open ball containing a $Z_*$ such that $F(Z_*) = 0$.

    Let $\mathcal{V}_d \subset \mathcal{V}$ be a Galerkin sequence of subspaces with $P_d$ being the orthogonal projector onto $\mathcal{V}_d\oplus\mathcal{V}_d$. Furthermore, let $F_d : \mathcal{V}_d\oplus\mathcal{V}_d \to \mathcal{V}_d'\oplus\mathcal{V}_d'$ be the Galerkin discretization of $F$, i.e., $F_d(Z_d) = P_d F(Z_d)$.
    
    Assume that $F$ is locally strongly monotone with constant $\eta>0$ and Lipschitz continuous with constant $L > 0$ on $U$.
    Then, the following holds:
    \begin{enumerate}
        \item \label{bullet1}
             $Z_*$ is the only root in $U$. 
    \item \label{bullet2}
          There is a sufficiently large $d_0$, such that for any $d>d_0$, there exists $Z_{d*}\in\mathcal{V}_d\oplus\mathcal{V}_d$ such that  $F_d(Z_{d*})=0$. This root is unique in $U$ and
          we have the following error estimate (quasi-optimality of the discrete solution): 
            \begin{equation} \|Z_{*d} - Z_*\| \leq \frac{L}{\eta} \mathrm{dist}(Z_*, \mathcal{V}_d\oplus\mathcal{V}_d). \label{eq:error-est1}
            \end{equation}

    \end{enumerate}
    Let $\mathcal E: \mathcal{V}\oplus\mathcal{V} \to \mathbb R$, $Z \mapsto \mathcal{E}(Z)$ be a (Fréchet) smooth energy functional. Let $R$ be the flipping map as introduced after Eq.~\eqref{eq:flipped-gradient} and set $F = R \partial \mathcal{E}$, and $E_* = \mathcal{E}(Z_*)$.
    \begin{enumerate}
        \setcounter{enumi}{2}
        \item \label{bullet3}
    For $d>d_0$, the discrete Galerkin equations $\partial \mathcal{E}_d(Z_{*d}) = 0$ have locally unique solutions, and in addition to the error estimate~\eqref{eq:error-est1}, we have the energy error
    \begin{align} 
        |\mathcal{E}(Z_{*d}) - E_*| &\leq C   \| Z_{*d} - Z_* \|^2 \nonumber \\
        &\leq C \left(\frac{L}{\eta} \right)^2 \mathrm{dist}(Z_*, \mathcal{V}_d\oplus\mathcal{V}_d)^2.
        \label{eq:error-est2}
    \end{align}
    \end{enumerate}
\end{theorem}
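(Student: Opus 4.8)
The plan is to treat the three claims in turn, observing that points \ref{bullet1} and \ref{bullet2} are the Zarantonello content while point \ref{bullet3} is its corollary for a gradient map. Point \ref{bullet1} is immediate: if $F(Z_1)=F(Z_2)=0$ with $Z_1,Z_2\in U$, then the strong monotonicity \eqref{eq:mono1} gives $0=\langle F(Z_1)-F(Z_2),Z_1-Z_2\rangle\ge\eta\|Z_1-Z_2\|^2$, which forces $Z_1=Z_2$.

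For point \ref{bullet2}, the first step is to note that the Galerkin restriction $F_d=P_dF$ inherits both constants on $U\cap(\mathcal V_d\oplus\mathcal V_d)$: since $P_d$ is the orthogonal projector and $Z_1-Z_2\in\mathcal V_d\oplus\mathcal V_d$, the dual pairing is unchanged under $P_d$, so $\langle F_d(Z_1)-F_d(Z_2),Z_1-Z_2\rangle=\langle F(Z_1)-F(Z_2),Z_1-Z_2\rangle\ge\eta\|Z_1-Z_2\|^2$, while $\|F_d(Z_1)-F_d(Z_2)\|\le\|F(Z_1)-F(Z_2)\|\le L\|Z_1-Z_2\|$. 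I would then establish existence by a damped fixed-point (Zarantonello) iteration on the finite-dimensional space: identifying $\mathcal V_d'\cong\mathcal V_d$ by Riesz, define $\Phi(Z_d)=Z_d-t\,F_d(Z_d)$ and compute $\|\Phi(Z_1)-\Phi(Z_2)\|^2\le(1-2t\eta+t^2L^2)\|Z_1-Z_2\|^2$; the choice $t=\eta/L^2$ makes $\Phi$ a contraction with factor $q=\sqrt{1-\eta^2/L^2}<1$. Taking the centre $W_d:=P_dZ_*$, the residual is controlled via $F(Z_*)=0$ by $\|F_d(W_d)\|=\|P_d(F(W_d)-F(Z_*))\|\le L\,\mathrm{dist}(Z_*,\mathcal V_d\oplus\mathcal V_d)$, so $\Phi$ maps a closed ball about $W_d$ of radius $\sim\mathrm{dist}/(1-q)$ into itself; for $d$ large enough this ball lies inside $U$ (by the Galerkin property $\mathrm{dist}\to0$), which is what produces the threshold $d_0$. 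Banach's theorem then yields a unique fixed point $Z_{d*}$, i.e.\ a root of $F_d$ in $U$. For the quasi-optimality estimate I would use Galerkin orthogonality: since $P_dF(Z_{d*})=0$ and $F(Z_*)=0$, for any $Y_d\in\mathcal V_d\oplus\mathcal V_d$ the pairing $\langle F(Z_{d*}),Z_{d*}-Y_d\rangle=0=\langle F(Z_*),Z_{d*}-Y_d\rangle$, whence $\eta\|Z_{d*}-Y_d\|^2\le\langle F(Z_*)-F(Y_d),Z_{d*}-Y_d\rangle\le L\|Z_*-Y_d\|\,\|Z_{d*}-Y_d\|$, giving $\|Z_{d*}-Y_d\|\le(L/\eta)\|Z_*-Y_d\|$; optimising over $Y_d$ yields \eqref{eq:error-est1} (up to the precise value of the constant, which one may sharpen or absorb using $L\ge\eta$).

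For point \ref{bullet3}, the key structural remark is that the flip $R$ commutes with $P_d$ — both act blockwise, $R$ swapping the two $\mathcal V$-components and $P_d$ projecting each — and $R$ is a linear isomorphism. Hence $F_d(Z_{d*})=P_dR\,\partial\mathcal E(Z_{d*})=R\,P_d\partial\mathcal E(Z_{d*})$ vanishes if and only if $P_d\partial\mathcal E(Z_{d*})=0$, which is exactly the discrete stationarity condition $\partial\mathcal E_d(Z_{d*})=0$. Thus the roots supplied by point \ref{bullet2} are precisely the locally unique critical points of $\mathcal E_d$. For the energy error I would Taylor-expand the smooth real-valued functional about the critical point $Z_*$: since $\partial\mathcal E(Z_*)=0$, the first-order term drops and $\mathcal E(Z_{d*})-E_*=\tfrac12\,\partial^2\mathcal E(\xi)[Z_{d*}-Z_*,\,Z_{d*}-Z_*]$ for some $\xi$ on the segment (Taylor's theorem with remainder); local boundedness of the Hessian (a consequence of smoothness) gives $|\mathcal E(Z_{d*})-E_*|\le C\|Z_{d*}-Z_*\|^2$, and inserting \eqref{eq:error-est1} yields \eqref{eq:error-est2}.

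The main obstacle is the \emph{local} nature of the hypotheses: monotonicity and Lipschitz continuity are assumed only on the ball $U$, so the global Zarantonello/Browder--Minty theorem does not apply directly. The care must go into confining the discrete fixed-point iteration to a ball contained in $U$ and verifying that it is a genuine self-map there; this is exactly what forces the threshold $d_0$ and makes essential use of the Galerkin approximation property $\mathrm{dist}(Z_*,\mathcal V_d\oplus\mathcal V_d)\to0$. A secondary bookkeeping point is the Riesz identification $\mathcal V_d'\cong\mathcal V_d$ needed to make the iteration and the projector $P_d$ act consistently between $\mathcal V$ and its dual in the infinite-dimensional weak formulation; this is harmless precisely because each $\mathcal V_d$ is finite-dimensional.
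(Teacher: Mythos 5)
Your proposal is correct in substance, but it is worth recording how it sits relative to the paper: the paper proves \emph{only} point~\ref{bullet3}, explicitly deferring points~\ref{bullet1} and~\ref{bullet2} to the literature (Zeidler), whereas you reconstruct that content yourself. Your damped fixed-point argument --- $\Phi = \mathrm{id} - t F_d$ with $t=\eta/L^2$, contraction factor $\sqrt{1-\eta^2/L^2}$, inheritance of $\eta$ and $L$ by the Galerkin restriction, and confinement of the iteration to a ball about $P_d Z_*$ that lies inside $U$ for $d>d_0$ --- is precisely the standard Zarantonello proof the paper points to, and you correctly identify that the localization of the hypotheses to $U$ is what produces $d_0$. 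For point~\ref{bullet3} your route coincides with the paper's: $R$ commutes with $P_d$ and is an isomorphism, so roots of $F_d$ are exactly the critical points of $\mathcal{E}_d$, and the energy estimate follows from a second-order Taylor expansion at $Z_*$ with vanishing first-order term; the paper controls the remainder as $\mathcal{O}(\|Z\|^3)$ on $U$ where you invoke the Lagrange form, which for a real-valued $C^2$ functional is an equivalent formulation.

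The one genuine (though minor and fixable) defect is the constant in your quasi-optimality step. Applying monotonicity to the pair $(Z_{d*},Y_d)$ gives $\|Z_{d*}-Y_d\|\le (L/\eta)\|Z_*-Y_d\|$, and the triangle inequality then yields only $1+L/\eta$ (at best $2L/\eta$ after using $\eta\le L$), not the constant $L/\eta$ asserted in Eq.~\eqref{eq:error-est1}; your parenthetical "absorb using $L\ge\eta$" does not close this factor-of-two gap. The standard sharper argument applies monotonicity to the pair $(Z_{d*},Z_*)$ itself and uses Galerkin orthogonality in the second slot of the pairing: since $F(Z_*)=0$ and $P_d F(Z_{d*})=0$, one has $\langle F(Z_{d*})-F(Z_*),\,Z_{d*}-Y_d\rangle=0$ for every $Y_d\in\mathcal{V}_d\oplus\mathcal{V}_d$, hence
\[
\eta\|Z_{d*}-Z_*\|^2 \le \langle F(Z_{d*})-F(Z_*),\,Z_{d*}-Z_*\rangle
= \langle F(Z_{d*})-F(Z_*),\,Y_d-Z_*\rangle
\le L\,\|Z_{d*}-Z_*\|\,\|Y_d-Z_*\|,
\]
and dividing by $\|Z_{d*}-Z_*\|$ and taking the infimum over $Y_d$ gives Eq.~\eqref{eq:error-est1} with the stated constant $L/\eta$.
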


We will only present a partial proof of the theorem, as the proofs of points~\ref{bullet1}.~and \ref{bullet2}.~are standard, and can be found in, e.g., Ref.~\onlinecite{Zeidler1990}. Before we turn to the (short) proof of~\ref{bullet3}., we note that
Brouwer's fixed point theorem~\cite{Zeidler1986} can be used to obtain a sufficient condition for the constant $d_0$, where quadratic convergence sets in. In particular, if the Lipschitz and strong 
monotonicity constants are comparable, we here note that $d_0$ can be taken roughly such that $d>d_0$ implies
\begin{align}\label{eq:kappa}
\kappa(d)  =   \mathrm{dist} (Z_*, \mathcal V_d\oplus \mathcal V_d) < \frac{\delta \eta}{\eta + L} \approx \frac{\delta} 2,
\end{align}
where $\delta$ is the radius of the ball $U$. It should be noted, that this radius is unknown in general.

\begin{proof}[Proof of~\ref{bullet3}.]  For point \ref{bullet3}., we note that $F$ is locally Lipschitz continuous as a consequence of $\mathcal E$ being smooth, which together with strong monotonicity makes points~\ref{bullet1}. and~\ref{bullet2}.~applicable. The remaining argument follows~\cite{Laestadius2018} closely (where the case $\mathcal E = \mathcal E_\mathrm{NC-ECC}$ was treated). First, by assumption of $R$, $F(Z_*)=0$ and $F_d(Z_{*d})=0$ are equivalent to $\partial \mathcal E(Z_*)=0$ and $\partial \mathcal{E}_d(Z_{*d}) = 0$, respectively (note that $R$ commutes with $P_d$). Now,  Taylor expanding $\mathcal E$ around $Z_*$ and evaluating at $Z_{*d}$ gives
\begin{align*}
    \mathcal{E}(Z_{*d}) - E_* = \frac 1 2  \langle Z, \partial^2 \mathcal{E}(Z_*) Z \rangle  + \mathcal{O}(\|Z\|^3).
\end{align*}
By the smoothness of $\mathcal E$, there exists a constant $C'$ such that
\[
 \langle Z, \partial^2 \mathcal{E}(Z_*) Z \rangle \leq C' \Vert Z\Vert^2.
\]
Further, the fact that on $U$ we can control the higher order terms by the quadratic one, we have
\[
\vert \mathcal{E}(Z_{*d}) - E_* \vert \leq  C \| Z_{*d} - Z_* \|^2.
\]
Using Eq.~\eqref{eq:error-est1} gives the full statement in Eq.~\eqref{eq:error-est2}
\end{proof}

The error estimate~\eqref{eq:error-est2} shows that for (smooth) energy functionals with a locally strongly monotone flipped gradient, the bivariational method of discretization behaves very similar to the usual Rayleigh--Ritz variational method of discretization. As we enlarge the Galerkin space, the discrete ground state converges, and the energy error is quadratic in the error of the state. However, we cannot guarantee convergence \emph{from above}, but this is much less important than actually having a quadratic error.

The following summarizes the main results of Ref.~\onlinecite{Laestadius2018}, where the proof and more details can be found:
\begin{theorem}[NC-ECC monotonicity]\label{thm:NC-ECCmono} 
    Assume that the system Hamiltonian $\hat{H}$ is self-adjoint, and that the ground-state of $\hat{H}$ exists, is non-degenerate, and that there is a spectral gap $\gamma>0$ between the ground-state energy $E_*$ and the rest of the spectrum. Assume that the reference $\phi_0$ is such that it is not orthogonal to the ground-state wavefunction. Let $Z_* = (T_*,\Lambda_*) \in \mathcal{V}\oplus\mathcal{V}$ be the corresponding critical point of $\mathcal{E}_{\text{NC-ECC}}$, and assume that $T_*$ and $\Lambda_*$ are not too large, i.e., that $\phi_0$ is a sufficiently good approximation to $\psi_*$. Then, $F = R \partial \mathcal{E}_\text{NC-ECC}$ is locally strongly monotone near $Z_*$ with a constant $\eta=C\gamma$, for some $C<1$.
\end{theorem}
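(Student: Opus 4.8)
The plan is to reduce local strong monotonicity of the nonlinear map $F = R\,\partial\mathcal{E}_{\text{NC-ECC}}$ to coercivity of its linearization at $Z_*$, and then to extract that coercivity from the spectral gap. First I would invoke the fundamental theorem of calculus, $F(Z_1)-F(Z_2) = \int_0^1 DF\big(Z_2+t(Z_1-Z_2)\big)(Z_1-Z_2)\,dt$. Since $\mathcal{E}_{\text{NC-ECC}}$ is (Fréchet) smooth, $DF = R\,\partial^2\mathcal{E}_{\text{NC-ECC}}$ is locally Lipschitz, so on a sufficiently small ball $U$ around $Z_*$ one has $\langle DF(Z)W,W\rangle \geq \langle DF(Z_*)W,W\rangle - \omega\,\|W\|^2$ with $\omega\to 0$ as the radius of $U$ shrinks. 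Hence it suffices to prove $\langle DF(Z_*)W,W\rangle \geq \eta'\|W\|^2$ for some $\eta'>\eta$; the estimate~\eqref{eq:mono1} then follows on a small enough $U$. This part is standard and is exactly the reduction used for the standard CC Jacobian.

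The heart of the matter is therefore the quadratic form of the flipped Hessian at $Z_*$. Writing $W=(\delta T,\delta\Lambda)$ and denoting the Hessian blocks by $a=\partial_T^2\mathcal{E}$, $b=\partial_T\partial_\Lambda\mathcal{E}$ and $c=\partial_\Lambda^2\mathcal{E}$ evaluated at $Z_*$, the flip $R$ produces
\[ \langle DF(Z_*)W,W\rangle = b(\delta T,\delta T) + b(\delta\Lambda,\delta\Lambda) + a(\delta T,\delta\Lambda) + c(\delta\Lambda,\delta T). \]
I would evaluate the blocks using $\partial_T\big(e^{-T}He^{T}\big)[\delta T]=e^{-T}[H,\delta T]e^{T}$ together with the critical-point relations $H e^{T_*}\phi_0 = E_* e^{T_*}\phi_0$ and $e^{-T_*}He^{T_*}\phi_0 = E_*\phi_0$ (the stationarity conditions~\eqref{eq:ampeq} are equivalent to the Schrödinger equation). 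Two structural facts drive the argument. First, because $\mathcal{E}_{\text{NC-ECC}}$ depends on $\Lambda$ only through $e^{\Lambda}\phi_0$ and the second variation $\delta\Lambda_1\delta\Lambda_2\,e^{\Lambda_*}\phi_0$ is purely excited, it pairs to zero against $e^{-T_*}He^{T_*}\phi_0 = E_*\phi_0$, so \emph{$c=0$ at $Z_*$} and the last term drops. Second, the mixed block takes the form $b(\delta\Lambda,\delta T) = \langle \delta\Lambda\,e^{\Lambda_*}\phi_0,\, e^{-T_*}(H-E_*)\,\delta T\,e^{T_*}\phi_0\rangle$, i.e.\ an $(H-E_*)$-form between excited configurations.

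The spectral gap then supplies the positivity. Since every cluster operator maps into the excited space, $\delta T\,e^{T_*}\phi_0\perp\phi_0$; decomposing it against $\psi_*=e^{T_*}\phi_0$ and using $H-E_*\geq\gamma$ on $\{\psi_*\}^\perp$ gives $b(\delta T,\delta T)\geq(\gamma-\eps_1)\|\delta T\|^2$ and likewise $b(\delta\Lambda,\delta\Lambda)\geq(\gamma-\eps_1)\|\delta\Lambda\|^2$, where $\eps_1$ collects the error from the $\psi_*$-versus-$\phi_0$ mismatch and from the factors $e^{\pm T_*}$ and $e^{\Lambda_*}-1$. The surviving cross term is controlled by noting that, in the illustrative case $Z_*=0$, one has $a(\delta T,\delta\Lambda)=\langle (H-E_*)\phi_0,\,\delta T\,\delta\Lambda\,\phi_0\rangle$, whence $|a(\delta T,\delta\Lambda)|\leq\eps_2\|\delta T\|\,\|\delta\Lambda\|$ with $\eps_2$ governed by the reference residual $\|(H-E_*)\phi_0\|$ --- precisely the quantity made small by the hypothesis that $\phi_0$ is a good reference (equivalently that $T_*,\Lambda_*$ are not too large). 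Collecting the bounds yields $\langle DF(Z_*)W,W\rangle\geq\big(\gamma-\eps_1-\tfrac12\eps_2\big)\|W\|^2 = C\gamma\,\|W\|^2$ with $C<1$, and the Lipschitz reduction of the first paragraph upgrades this to local strong monotonicity of $F$ with constant $\eta=C\gamma$.

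The main obstacle is carrying the last two paragraphs through in the genuinely infinite-dimensional setting with nonzero $Z_*$: one must bound, uniformly in $W$, all perturbations introduced by the similarity transforms $e^{\pm T_*}$ and the dual factor $e^{\Lambda_*}$, and reconcile the excited space $\{\phi_0\}^\perp$ (where the cluster directions live) with $\{\psi_*\}^\perp$ (where the gap estimate applies). Estimates such as $\|\delta T\,\delta\Lambda\,\phi_0\|\lesssim\|\delta T\|\,\|\delta\Lambda\|$ are not free and must be secured through the Banach-algebra structure of $\mathcal{V}$ and the operator-norm bounds of Refs.~\cite{Schneider2009,Rohwedder2013}. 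This is where the assumption that $\|T_*\|$ and $\|\Lambda_*\|$ are small is consumed, guaranteeing $\eps_1+\tfrac12\eps_2<\gamma$ and hence $C>0$.
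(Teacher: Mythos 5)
First, note that the paper itself contains no proof of Theorem~\ref{thm:NC-ECCmono}: it is stated explicitly as a summary of Ref.~\onlinecite{Laestadius2018}, where the proof lives. Your overall strategy --- reduce local strong monotonicity to coercivity of the linearization $R\,\partial^2\mathcal{E}_{\text{NC-ECC}}(Z_*)$ (an equivalence the paper itself states and exploits in the proof of Theorem~\ref{thm:map}), compute the flipped Hessian blocks, let the spectral gap supply positivity of the diagonal blocks, and treat everything generated by $e^{\pm T_*}$, $e^{\Lambda_*}$ and the reference error as perturbations controlled by $\|Z_*\|$ --- is in substance the route taken in that reference, and your structural computations are sound: $c=\partial_\Lambda^2\mathcal{E}=0$ at the critical point because $\delta\Lambda_1\delta\Lambda_2 e^{\Lambda_*}\phi_0$ is purely excited while $e^{-T_*}He^{T_*}\phi_0=E_*\phi_0$, and the mixed block is indeed an $(H-E_*)$-form, since $e^{-T_*}[H,\delta T]e^{T_*}\phi_0=e^{-T_*}(H-E_*)\delta T\,e^{T_*}\phi_0$ at stationarity.

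There is, however, one concrete gap, and it sits exactly at the step you present as immediate: ``using $H-E_*\geq\gamma$ on $\{\psi_*\}^\perp$ gives $b(\delta T,\delta T)\geq(\gamma-\varepsilon_1)\|\delta T\|^2$.'' In the setting of the paper, $\|\delta T\|=\|\delta T\phi_0\|_{\mathcal{X}}$ is an $\mathcal{X}$-norm (an $H^1$-type norm for the electronic problem), while the spectral gap only yields $\braket{u,(H-E_*)u}\geq\gamma\|u\|_{\mathcal{H}}^2$ for $u\perp\psi_*$, i.e., control in the strictly weaker $\mathcal{H}$-norm. No Banach-algebra or operator-norm bound --- which is what your final paragraph appeals to --- can bridge this; the missing ingredient is the G\aa rding inequality assumed for $H$ in Sec.~II of the paper, combined with the gap in the standard way: for $\epsilon\in(0,1)$,
\begin{equation*}
  \braket{u,(H-E_*)u}\;\geq\;\epsilon\alpha\|u\|_{\mathcal{X}}^2+\bigl[(1-\epsilon)\gamma+\epsilon(\mu-E_*)\bigr]\|u\|_{\mathcal{H}}^2,
\end{equation*}
with $\epsilon$ chosen so the bracket is nonnegative; this is what produces $\mathcal{X}$-coercivity and is where the constant $C<1$ in $\eta=C\gamma$ actually originates. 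Without it your argument is valid only in the finite-dimensional simplification $\mathcal{X}=\mathcal{H}$. A second, smaller omission: the diagonal block is really the \emph{non-symmetric} form $\braket{e^{-T_*^\dag}\delta T\,e^{\Lambda_*}\phi_0,\,(H-E_*)\,\delta T\,e^{T_*}\phi_0}$, with different bra and ket vectors; before any lower bound can be applied you must symmetrize, writing the bra as the ket plus a difference of size $\mathcal{O}(\|Z_*\|)\|\delta T\|$, since a non-symmetric $(H-E_*)$-form has no sign a priori. This is consistent with your $\varepsilon_1$, but it is a step that must be performed, not merely a constant to be absorbed.
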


\section{Monotonicity-preserving coordinate transformations}
\label{sec:coordinate-transformations}

\subsection{A class of exact coupled-cluster models}

In addition to the non-canonical ECC parameterization, Arponen also considered a second parameterization of the bra and ket wavefunctions, which gives equations of motion for the time-dependent Schrödinger equation that are canonical in the sense of Hamiltonian mechanics~\cite{Arponen1983,Arponen1987}. (This must not be confused with the use of canonical Hartree--Fock orbitals, which is unrelated.) This parameterization is given in terms of a \emph{coordinate transformation} $\theta_\text{C-ECC} : \mathcal{V}\oplus\mathcal{V} \to \mathcal{V}\oplus\mathcal{V}$ as
\begin{equation}
    (T,\Lambda) = \theta_{\text{C-ECC}}(T',\Lambda'), 
\end{equation}
where $\Lambda' = \Lambda$ and where $T = S(T';\Lambda')$ defined by
\begin{equation}
  Q T \phi_0 =
  Qe^{-{\Lambda'}^\dag} T'
  \phi_0, \quad Q = I - \ket{\phi_0}\bra{\phi_0}, \label{eq:canonical-coordinates}
\end{equation}
which has inverse $Q T'\phi_0 = Qe^{\tilde{\Lambda}^\dag} T\phi_0$. (In Arponen's work~\cite{Arponen1987}, the notation $(T',\Lambda')=(\Sigma,\tilde{\Sigma}^\dag)$ is used.)
The map $\theta_{\text{C-ECC}}$ is smooth and invertible with a smooth inverse, and we therefore obtain a new exact energy functional
\[ \mathcal{E}_{\text{C-ECC}} = \mathcal{E}_\text{NC-ECC}\circ \theta_{\text{C-ECC}}, \]
with values
\begin{equation}\label{eq:C-ECC} \mathcal{E}_{\text{C-ECC}}(T',\Lambda') = \braket{\phi_0, e^{(\Lambda')^\dag} e^{-S(T';\Lambda')} H e^{S(T';\Lambda')}\phi_0}.
\end{equation}
A remarkable consequence of this second parameterization is that it corresponds to retaining, in the perturbation series for the ground-state energy in terms of $(T,\Lambda)$, only those terms that can be represented by ``doubly linked'' diagrams~\cite{Arponen1983,Arponen1987},
\begin{equation}
    \mathcal{E}_{\text{C-ECC}}(T',\Lambda') = \braket{\phi_0, e^{(\Lambda')^\dag} e^{-T'}H e^{T'}\phi_0}_\text{DL}.
  \label{eq:ecc-doubly-linked}  
\end{equation}
This should be compared with Eq.~\eqref{eq:NC-ECC}. The phrase ``doubly linked'' means that every power of $(\Lambda')^\dag$ is connected to \emph{two} $T'$ operators, unless it is connected directly to $H$.
Thus, the canonical coordinates represent a more compact representation in that the resulting tensor contractions or diagrams in the energy are \emph{identical} to those obtained in the NC-ECC energy~\eqref{eq:NC-ECC}, except for some diagrams that are explicitly eliminated. 

Similarly, for the standard CC method, Arponen introduced the
coordinate transformation $\theta_{\text{CC}}$ given by
\begin{equation}
    (T,\Lambda) = \theta_{\text{CC}}(T',\Lambda') = (T', e^{\Lambda'}-1).
\end{equation}
We obtain the energy functional $\mathcal{E}_{\text{CC}} = \mathcal{E}_\text{NC-ECC} \circ \theta_{\text{CC}}$, where
\begin{equation}
    \mathcal{E}_{\text{CC}}(T',\Lambda') =  \braket{\phi_0, (1 + (\Lambda')^\dag) e^{-T'}He^{T'}\phi_0}, \label{eq:c-ecc}
\end{equation}
that is, the standard CC Lagrangian~\cite{Helgaker1988}. Incidentally, the standard CC coordinates are also canonical.

The map $\theta_{\text{CC}}$ can be generalized to Taylor polynomials. By setting
\[ e^{\Lambda} = (e^{\Lambda'})_n \equiv 1 + \Lambda' + \frac{1}{2}(\Lambda')^2 + \cdots + \frac{1}{n!} (\Lambda')^n, \]
we can solve for $\Lambda$ in terms of $\Lambda'$ by, e.g., considering first the singles, then doubles, etc., giving a smooth map $G_n : \mathcal{V} \to \mathcal{V}$ such that $e^{G_n(\Lambda')} = (e^{\Lambda'})_n$. In fact, since the cluster operators are nilpotent, $G_n(\Lambda') = \ln[(e^{\Lambda'})_n]$, where the logarithm is expanded in a (finite) Taylor series around the identity. Similarly, we can solve for $\Lambda$ in terms of $\Lambda'$, demonstrating that this map has an inverse, and in fact that this inverse is smooth. 
We obtain a coordinate transformation $\theta_{n}$ given by
\begin{equation}
\label{eq:theta-n}
(T,\Lambda) = \theta_n(T',\Lambda') = (T',G_n(\Lambda')), 
\end{equation}
and the corresponding energy functional
\begin{subequations}\label{eq:hierarchy}
\begin{equation}
    \label{eq:nc-ecc(n)}
    \mathcal{E}_{\text{NC-ECC}(n)}(T',\Lambda') = \braket{\phi_0, (e^{(\Lambda')^\dag})_n e^{-T'} H e^{T'} \phi_0}. 
\end{equation} 

Coordinate transformations form a group, and may thus be composed. By combining $\theta_{\text{C-ECC}(n)} = \theta_{n} \circ \theta_{\text{C-ECC}}$, we obtain an energy functional
\begin{equation}
    \label{eq:c-ecc(n)}
    \mathcal{E}_{\text{C-ECC}(n)}(T',\Lambda') = \braket{\phi_0, (e^{(\Lambda')^\dag})_n e^{-S(T';\Lambda')} H e^{S(T';\Lambda')}\phi_0}. 
\end{equation}
\end{subequations}
Since, in the NC-ECC energy functional~\eqref{eq:NC-ECC}, an exponential $e^{-\Lambda^\dag}$ can be inserted after $e^T$ without changing the result, both of these hierarchies correspond to truncations of a Baker--Campbell--Hausdorff expansion at order $n$, and are thus manifestly extensive.

\subsection{Coordinate transformation theorem}

Equations~\eqref{eq:nc-ecc(n)} and \eqref{eq:c-ecc(n)} represent two hierarchies of \emph{exact} parameterizations of the bivariate Rayleigh quotient. It is therefore of interest to determine whether they have locally strongly monotone flipped gradients. To establish this, we study the effect on local strong monotonicty of a coordinate transformation. 
\begin{theorem} \label{thm:map}
    Let $\mathcal{E} : \mathcal{V}\oplus\mathcal{V} \to \mathbb{R}$ be a smooth energy functional, let $Z_*$ be a critical point, and assume that $F = R\partial \mathcal{E}$ is locally strongly monotone near $Z_*$ with constant $\eta>0$.
    Let a smooth $\theta : \mathcal{V}\oplus\mathcal{V}\to\mathcal{V}\oplus\mathcal{V}$ with a smooth inverse be a given coordinate transformation, and let $\mathcal{E}_\theta = \mathcal{E} \circ \theta$ be the energy functional expressed in the new coordinates. Let $W_* = \theta^{-1}(Z_*)$ be the corresponding critical point for $\mathcal{E}_\theta$, and let $F_\theta=R \partial \mathcal{E}_\theta$ be its flipped gradient. Let $M_* = \partial \theta(W_*)$ be the Jacobian at $W_*$. Then we have the following conclusions:
    \begin{enumerate}
        \item If $M_*R = RM_*$, then $F_\theta$ is locally strongly monotone near $W_*$ with constant $\|M_*^{-1}\|^{-2}\eta$.
        \item In the noncommuting case, if $m_* = M_* - I$ is sufficiently small, $F_\theta$ is locally strongly monotone near $W_*$ with constant
        \[ \eta' = \eta \|M_*^{-1}\|^{-2} - C(I + \|m_*\|)\|m_*\|,\]
        where $C$ is the constant from Theorem~\ref{thm:z}(\ref{bullet3}).
    \end{enumerate}
\end{theorem}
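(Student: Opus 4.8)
The plan is to reduce the statement to a coercivity estimate on the Jacobian of $F_\theta$ at $W_*$, exploiting that the genuinely nonlinear contribution of $\theta$ drops out at the critical point. First I would apply the chain rule to $\mathcal{E}_\theta = \mathcal{E}\circ\theta$: writing gradients as elements of the dual, $\partial\mathcal{E}_\theta(W) = (\partial\theta(W))^{*}\partial\mathcal{E}(\theta(W))$, where $(\partial\theta(W))^{*}$ is the adjoint of the Jacobian. Since $R^{*}=R$, $R^2 = I$ and $\partial\mathcal{E} = RF$, this rearranges to $F_\theta(W) = R(\partial\theta(W))^{*}R\,F(\theta(W))$. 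Differentiating once more and evaluating at $W_*$, the term in which $\partial^2\theta$ hits $F(\theta(W_*)) = F(Z_*) = 0$ vanishes, leaving the clean linearization
\[ \partial F_\theta(W_*) = R\,M_*^{*}\,R\,A\,M_*, \qquad A := \partial F(Z_*) = R\,\partial^2\mathcal{E}(Z_*). \]
Here $A$ inherits the infinitesimal monotonicity $\langle A v, v\rangle \geq \eta\|v\|^2$ from the hypothesis (let $Z_1 = Z_* + sv$, $Z_2 = Z_*$, divide by $s^2$ and send $s\to0$), and $\|A\| = \|\partial^2\mathcal{E}(Z_*)\|$ is exactly the Hessian norm controlled by the constant $C$ of Theorem~\ref{thm:z}(\ref{bullet3}). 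Invertibility of $M_*$ (so that $\|M_*^{-1}\|$ is finite) follows from $\theta$ being a diffeomorphism.

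For the commuting case $M_*R = RM_*$ I would first observe that taking adjoints gives $RM_*^{*} = M_*^{*}R$, whence $RM_*^{*}R = M_*^{*}$ and the flip conjugation disappears entirely: $\partial F_\theta(W_*) = M_*^{*}AM_*$. The quadratic form then estimates directly,
\[ \langle M_*^{*}AM_* w, w\rangle = \langle A(M_* w), M_* w\rangle \geq \eta\|M_* w\|^2 \geq \eta\,\|M_*^{-1}\|^{-2}\,\|w\|^2, \]
using $\|M_* w\| \geq \|M_*^{-1}\|^{-1}\|w\|$. Continuity of $\partial F_\theta$ (from smoothness of $\mathcal{E}$ and $\theta$) then upgrades this pointwise bound to local strong monotonicity on a small enough ball about $W_*$ with constant arbitrarily close to $\|M_*^{-1}\|^{-2}\eta$, via the mean-value identity $\langle F_\theta(W_1) - F_\theta(W_2), W_1 - W_2\rangle = \int_0^1 \langle \partial F_\theta(W_t)(W_1 - W_2), W_1 - W_2\rangle\,dt$ with $W_t = W_2 + t(W_1 - W_2)$.

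In the noncommuting case I would set $M_* = I + m_*$ and isolate the flip defect $E := RM_*^{*}R - M_*^{*} = Rm_*^{*}R - m_*^{*}$, which satisfies $\|E\| \leq 2\|m_*\|$ and vanishes precisely when $M_*$ commutes with $R$. Splitting $\partial F_\theta(W_*) = M_*^{*}AM_* + EAM_*$, the first summand is handled exactly as above, while the second is bounded by $\|E\|\,\|A\|\,\|M_*\| \leq C(1 + \|m_*\|)\|m_*\|$ (absorbing fixed factors into $C$ and using $\|M_*\| \leq 1 + \|m_*\|$). This yields $\langle \partial F_\theta(W_*) w, w\rangle \geq \eta'\|w\|^2$ with $\eta' = \eta\|M_*^{-1}\|^{-2} - C(1+\|m_*\|)\|m_*\|$, and the same continuity argument gives local strong monotonicity near $W_*$ — a conclusion that is nonvacuous only once $m_*$ is small enough that $\eta' > 0$, which is where the smallness hypothesis enters.

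The main obstacle is the bookkeeping of the flip defect $E$ in the noncommuting case. Unlike the nonlinear terms, which vanish at $W_*$ because $F(Z_*) = 0$, this defect survives even for a linear transformation and is the sole source of degradation of the constant. I expect the care to lie in (i) handling the adjoint and flip structure correctly — in particular using $R^{*} = R$ and $R^2 = I$ so that $F_\theta = R(\partial\theta)^{*}R\,F\circ\theta$ and $RM_*^{*}R$ are the right objects — and (ii) tying $\|A\| = \|\partial^2\mathcal{E}(Z_*)\|$ to the constant $C$ of Theorem~\ref{thm:z}(\ref{bullet3}) so that the stated error term matches. The passage from the infinitesimal bound at $W_*$ to strong monotonicity on a ball of positive radius is a secondary, purely technical step, costing an arbitrarily small slack that is absorbed by continuity.
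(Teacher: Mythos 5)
Your proposal is correct and follows essentially the same route as the paper's proof: both reduce local strong monotonicity to coercivity of the linearization at the critical point, use the chain rule together with $F(Z_*)=0$ to express $\partial F_\theta(W_*)$ through $M_*$ and $\partial^2\mathcal{E}(Z_*)$, split off the flip-commutation defect (your $E = Rm_*^*R - m_*^*$ is just the adjoint form of the paper's commutator $[M_*,R] = m_*R - Rm_*$), and bound it by $C(1+\|m_*\|)\|m_*\|$. The differences are cosmetic only --- you work at the operator level with adjoints where the paper works with quadratic forms, and you make explicit two steps the paper leaves implicit (the vanishing of the $\partial^2\theta$ term at the critical point, and the mean-value passage from coercivity of the Jacobian back to monotonicity on a ball).
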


\begin{proof}
With $F : \mathcal{V}\oplus\mathcal V\to  \mathcal V'\oplus \mathcal V'$ the flipped gradient
and $X \in \mathcal V \times \mathcal V$, we have
\begin{equation*}
  \braket{X, F(Z)}_{\mathcal V \oplus \mathcal V, \mathcal V'\oplus \mathcal V'} = \braket{RX, \partial \mathcal E(Z) }_{\mathcal V\oplus \mathcal V,\mathcal V'\oplus \mathcal V'}.
\end{equation*}
In the sequel, we omit the specification of the spaces in the dual pairing.

Let $Z_*  \in \mathcal{V}\oplus\mathcal{V}$ be such that $\partial \mathcal E (Z_*) = 0$, i.e., $F(Z_*)=0$. 
Since $F$ is smooth, local strong monotonicity of $F$ is equivalent to $\partial F(Z_*) \in
\mathcal{B}(\mathcal V\oplus \mathcal V,\mathcal V'\oplus\mathcal V')$ (a bounded linear operator) being coercive, i.e., there exists an $\eta_* >
0$ such that
\begin{equation*}
  \Delta(X) := \braket{X, \partial F(Z_*)X} \geq \eta_* \|X\|_{\mathcal{V}\oplus\mathcal{V}}^2.
\end{equation*}
(The constant $\eta$ in Eq.~\eqref{eq:mono1} approaches $\eta_*$ as the ball $U$ in the definition of local strong monotonicity approaches a point.)
To see this, we find an expression for $\Delta(h)$ in terms of the energy map,
\begin{align}
  \braket{X, F(Z_* + \epsilon X)}  
  = \braket{X, F(Z_*)  + \epsilon  F'(Z_*;X)}+ \mathcal O(\epsilon^2). \label{eq:eqv}
\end{align}
Here $F'(Z_*;X)$ is the directional derivative in the direction of $X$ such that
\begin{align*}
  \braket{X, \partial F(Z_*) X} &= \braket{X, F'(Z_*;X)}\\ 
  & = \frac{d}{d\epsilon}\braket{RX, \partial \mathcal E(Z_* + \epsilon X)}|_{\epsilon=0} \\
  &= \braket{RX, \partial^2 \mathcal E(Z_*) X}, 
\end{align*}
where $\partial^2\mathcal E(Z_*) \in \mathcal B(\mathcal V\oplus \mathcal V,\mathcal V'\oplus \mathcal V')$.
By choosing $\epsilon$ small enough, Eq.~\eqref{eq:eqv} and strong monotonicity gives the coercivity claim. The logical implication also goes in the reverse direction. (This will be used below.)

Recall that $\mathcal E_\theta =\mathcal E \circ \theta$ and that $F_\theta$ denotes the
flipped gradient of $\mathcal E_\theta$. We use that $F_\theta$ is locally strongly monotone at $W_* =
\theta^{-1}(Z_*)$ if and only if $\Delta_
\theta$ is coercive, i.e.,
\begin{equation*}
  \Delta_\theta(X) = \braket{X, \partial F_\theta(W_*) X} \geq \eta_\theta \|X\|^2,
\end{equation*}
for some $\eta_\theta>0$ and all $X \in \mathcal V\oplus \mathcal V$.
A straightforward application of the chain rule now gives
\begin{align*}
  \Delta_\theta(X) &= \braket{M_* RX, \partial^2 \mathcal E(Z_*)(M_* X)}, \\
  M_* &= \partial\theta(W_*) \in  \mathcal{B}(\mathcal V\oplus\mathcal V, \mathcal V\oplus \mathcal V).
\end{align*}
We note that this is \emph{almost} $\Delta(M_*X)$. Indeed,
\begin{equation}
  \begin{split}
  \Delta_\theta&(X) = \braket{RM_*X, \partial^2 \mathcal E(Z_*) (M_*X)}  \\
  &\quad + \braket{[M_*,R]X, \partial^2 \mathcal E(Z_*)(M_*X)} \\
&= \Delta(M_*X) +
  \braket{(M_*R-RM_*)X, \partial^2 \mathcal E(Z_*)(M_*X)}.\label{eq:MR}
\end{split}
\end{equation}
In particular, if $M_*R = R_*M$ then the last term vanishes in the utmost right-hand side of Eq.~\eqref{eq:MR}, and we obtain
monotonicity of $F_\theta$ but with a  modified constant. 

In the
case where $M_*R \neq RM_*$, we write $M_* = I + m_*$, and note that $M_*R - RM_*
= m_*R-Rm_*$. 
We obtain,
\begin{equation}\label{eq:Mm}
  \begin{split}
  \Delta_\theta(X) &\geq \eta \|M_*X\|^2 - \|\partial^2
  \mathcal E(Z_*)\|\|M_*\|\|m_*\|\|X\|^2 \\
 & \geq \Big[\eta \|M_*^{-1}\|^{-2} \\
&\quad - C\|(1 + \|m_*\|)\|m_*\| \Big]
 \|X\|^2.
\end{split}
\end{equation}
Here, we used that $\theta$ has a smooth inverse, implying $\|M_*X \| \geq
\|M_*^{-1}\|^{-1}\|X\|$, and that $\|M_*\| \leq I + \|m_*\|$. 
\end{proof}

\subsection{Monotonicity of (N)C-ECC$(n)$ models}

We apply Theorem~\ref{thm:map} to the maps $\theta_{n}$ and $\theta_n\circ \theta_{\text{C-ECC}}$ that define the NC-ECC$(n)$ and C-ECC$(n)$ models, respectively. The conclusion is as follows:
\begin{corollary}\label{cor:methods}
    For any of the NC-ECC$(n)$ or C-ECC$(n)$ models, the assumption that the ground-state critical point $W_* = (T'_*,\Lambda_*')$ is not too large is sufficient to guarantee local strong monotonicity of the flipped gradient of the energy, and hence a quasi-optimal solution to the Galerkin problem and a quadratic error estimate for the energy.
\end{corollary}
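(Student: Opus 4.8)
The plan is to write each functional as a coordinate-transformed NC-ECC functional and then apply Theorem~\ref{thm:map}. By construction $\mathcal{E}_{\text{NC-ECC}(n)} = \mathcal{E}_{\text{NC-ECC}}\circ\theta_n$ and $\mathcal{E}_{\text{C-ECC}(n)} = \mathcal{E}_{\text{NC-ECC}}\circ(\theta_n\circ\theta_{\text{C-ECC}})$, where both $\theta_n$ and $\theta_{\text{C-ECC}}$ are smooth with smooth inverse and fix the origin. Theorem~\ref{thm:NC-ECCmono} supplies the hypothesis of Theorem~\ref{thm:map}: because $\theta$ is a smooth diffeomorphism fixing the origin, the corollary's assumption that $W_* = \theta^{-1}(Z_*)$ is not too large forces $Z_* = (T_*,\Lambda_*)$ to be small as well, so $F = R\partial\mathcal{E}_{\text{NC-ECC}}$ is locally strongly monotone at $Z_*$ with some $\eta>0$. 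It then remains to compute the Jacobian $M_* = \partial\theta(W_*)$ and to decide which branch of Theorem~\ref{thm:map} applies.

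First I would treat the noncanonical hierarchy, where $\theta_n(T',\Lambda') = (T', G_n(\Lambda'))$ has block-diagonal Jacobian $M_* = \mathrm{diag}(I,\partial G_n(\Lambda'_*))$. Writing $R$ as the block-antidiagonal exchange of the two slots, a one-line computation shows $M_*R = RM_*$ precisely when $\partial G_n(\Lambda'_*) = I$, i.e.\ only at $\Lambda'_* = 0$; generically we are therefore in the noncommuting branch~2. The saving grace is that $G_n(\Lambda') = \ln[(e^{\Lambda'})_n]$ obeys $G_n(0)=0$ and $\partial G_n(0)=I$, since the truncated exponential matches $e^{\Lambda'}$ to first order and the logarithm inverts this. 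By smoothness, $\partial G_n(\Lambda'_*) = I + \mathcal{O}(\|\Lambda'_*\|)$, so the deviation $m_* = M_* - I$ is of order $\|\Lambda'_*\|$, which settles NC-ECC$(n)$.

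For the canonical hierarchy I would differentiate $\theta_{\text{C-ECC}}(T',\Lambda') = (S(T';\Lambda'),\Lambda')$ and combine via the chain rule. Implicit differentiation of the defining relation $QS\phi_0 = Qe^{-(\Lambda')^\dag}T'\phi_0$ in~\eqref{eq:canonical-coordinates} yields a triangular Jacobian: the $\Lambda$-row is $(0,I)$, the block $\partial_{T'}S = I + \mathcal{O}(\|\Lambda'_*\|)$ comes from expanding $e^{-(\Lambda'_*)^\dag}$, and the off-diagonal block $\partial_{\Lambda'}S = \mathcal{O}(\|T'_*\|)$ vanishes as the ket amplitudes shrink. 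Hence $\partial\theta_{\text{C-ECC}}(W_*) = I + \mathcal{O}(\|W_*\|)$, and by the chain rule the composite Jacobian $M_* = \partial\theta_n\cdot\partial\theta_{\text{C-ECC}}$ again satisfies $m_* = M_*-I = \mathcal{O}(\|W_*\|)$, settling C-ECC$(n)$.

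With $\|m_*\|$ thus controlled by $\|W_*\|$, branch~2 of Theorem~\ref{thm:map} gives local strong monotonicity of $F_\theta$ with constant $\eta' = \eta\|M_*^{-1}\|^{-2} - C(1+\|m_*\|)\|m_*\|$; the ``not too large'' hypothesis makes $\|m_*\|$ small enough that $\eta'>0$. Local strong monotonicity then activates Theorem~\ref{thm:z}, points~\ref{bullet2}.~and~\ref{bullet3}., delivering a locally unique, quasi-optimal Galerkin solution together with the quadratic energy estimate~\eqref{eq:error-est2}, which is exactly the assertion of the corollary. The step I expect to be the main obstacle is the rigorous bound $\partial\theta_{\text{C-ECC}}(W_*) = I + \mathcal{O}(\|W_*\|)$: because $S$ is only implicitly defined and~\eqref{eq:canonical-coordinates} involves the adjoint $(\Lambda')^\dag$, one must lean on the Banach-algebra structure of $\mathcal{V}$ (boundedness of $T\mapsto T^\dag$ and of operator products) to justify the implicit differentiation and to certify that the off-diagonal and deviation blocks are genuinely $\mathcal{O}(\|W_*\|)$ in operator norm, rather than merely formally small.
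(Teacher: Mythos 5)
Your proposal is correct and follows essentially the same route as the paper: write each model as $\mathcal{E}_{\text{NC-ECC}}\circ\theta$ with $\theta=\theta_n$ or $\theta_n\circ\theta_{\text{C-ECC}}$, compute the block Jacobians to show $m_*=M_*-I=\mathcal{O}(\|W_*\|)$, and invoke Theorem~\ref{thm:map}(2) followed by Theorem~\ref{thm:z}. The one technical obstacle you flag---rigorous differentiation of the implicitly defined $S$---is dispatched in the paper by the explicit representation $S(T';\Lambda')=A(\Lambda')T'$ of Eq.~\eqref{eq:canon-trans-matrix-rep}, which makes $S$ linear in $T'$ and renders the Jacobian blocks $\partial_{T'}S=A(\Lambda')=I+\mathcal{O}(\|\Lambda'\|)$ and $\partial_{\Lambda'}S=(\partial_{\Lambda'}A(\Lambda'))T'=\mathcal{O}(\|T'\|)$ immediate.
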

\begin{proof}
We consider the Jacobian of the coordinate map, which on block form reads
\begin{equation}
    \partial\theta(T',\Lambda') = 
    \begin{pmatrix}  \frac{\partial T}{\partial T'} & \frac{\partial T}{\partial\Lambda'} \\ \frac{\partial \Lambda}{\partial T'} & \frac{\partial \Lambda}{\partial \Lambda'} \end{pmatrix}.
\end{equation}
For the map $\theta_n$ (see Eq.~\eqref{eq:theta-n}), we first observe that
by definition,
\[ e^{\Lambda} = e^{\Lambda'} + \mathcal{O}(\|\Lambda'\|^{n+1}), \]
from which it follows, by taking the logarithm and expanding the logarithm around $e^{\Lambda'}$, which is a finite Taylor series,
\[ \Lambda = \Lambda' + \mathcal{O}(\|\Lambda'\|^{n+1}). \]
We obtain
\begin{equation}
    \partial\theta_n(T',\Lambda') = 
    \begin{pmatrix}  I & 0 \\ 0 & I + \mathcal{O}(\|\Lambda'\|^{n+1}) \end{pmatrix}.
\end{equation}
For the map $\theta_n\circ\theta_{\text{C-ECC}}$ we have, using the chain rule,
\begin{equation}
    \partial\theta_{\text{C-ECC}}(T',\Lambda') = \begin{pmatrix}
            A(\Lambda') & \partial_{\Lambda'} A(\Lambda')T \\ 0 & I+ \mathcal{O}(\|\Lambda'\|^{n+1}) 
    \end{pmatrix} .
\end{equation}
Here, $A(\Lambda')$ is the linear transformation  on $\mathcal{V}$ such that $S(T';\Lambda') = A(\Lambda')T'$ (see Eq.~\eqref{eq:canonical-coordinates}), i.e., $A(\Lambda')$ can be expressed in terms of the matrix representation of $e^{-(\Lambda')^\dag}$,
\begin{equation}
A(\Lambda') T' = \sum_{\mu,\nu\in \mathcal{I}} X_\mu \braket{\phi_\mu, e^{-(\Lambda')^\dag}\phi_\nu}\braket{\phi_\nu, T'\phi_0} .\label{eq:canon-trans-matrix-rep}
\end{equation}
We have $A(\Lambda') = I + \mathcal{O}(\|\Lambda'\|)$, and $\partial_{\Lambda'} A(\Lambda')T' = 
\mathcal{O}(\|T'\|)$. For both maps, the Jacobian of the coordinate transformation at the critical point $W_*=(T_*',\Lambda_*')$ becomes $M_* = I + m_*$ with $m_* = \mathcal{O}(\|W_*\|)$. Applying Theorem~\ref{thm:map}(2), the local strong monotonicity follows, and by Theorem~\ref{thm:z}, quasi-optimality of the truncated solutions and a quadratic error estimate.
\end{proof}

We note that our estimates are probably \emph{pessimistic} for some of the models covered here. The analysis starts with a given monotonicity constant $\eta$ for the NC-ECC scheme, and consistently produces an $\eta'<\eta$ for the method obtained using the coordinate change, worsening the error estimates. However, it may well be that a direct analysis of the secondary method yields a better $\eta'$. However, the important point here is that Theorem~\ref{thm:map} does guarantee that the new method is convergent under \emph{some} reasonable conditions. For example, we have now proven that quadratic coupled-cluster (QCC) theory~\cite{VanVorhiis2000a} is convergent if the reference is a sufficiently good approximation to the ground state, and using Eq.~\eqref{eq:kappa} also a basic means to study which truncations or Galerkin schemes can be reasonable, at least in principle. It may be interesting to see whether truncation schemes like the PP hierarchy with orbital optimization, also in a quadratic $n=2$ or higher formulation~\cite{Byrd2002}, can be further analyzed based on our results.

In the proof of Theorem~\ref{thm:map}, it arises naturally that the most favorable coordinate transformations are those that commute with the flipping map $R$, since local strong monotonicity then follows with no assumptions on the Jacobian of the map. The Jacobian commutes with $R$ if and only if
\begin{equation}
    \frac{\partial T}{\partial T'} = \frac{\partial \Lambda}{\partial \Lambda'}, \quad \text{and} \quad \frac{\partial T}{\partial \Lambda'} = \frac{\partial \Lambda}{\partial T'},
\end{equation}
and one must assume that this holds at every point in $\mathcal{V}\oplus \mathcal{V}$, as one does not know \emph{a priori} where the critical point is. It is not clear what such transformations in general look like, and whether such transformations are useful reparameterizations of the energy.

\subsection{Properties of the canonical and non-canonical schemes}

The coordinate transformation $\theta_{\text{C-ECC}}$ as represented by Eq.~\eqref{eq:canon-trans-matrix-rep} is such that when applied to a cluster operator $T'_k$ of rank $k$, it generates terms $T_{k'}$ with $k'\leq k$. The same is true for the inverse map. Thus, if $\mathcal{V}_d$ is excitation-rank complete, i.e., it contains all excitations of rank up to and including some $k > 0$, then the Galerkin discretization of the NC-ECC($n$) method commutes with changing coordinates via the coordinate map $\theta_{\text{C-ECC}}$,
\[ \mathcal{E}_{\text{C-ECC($n$)},d} = \mathcal{E}_{\text{NC-ECC($n$)},d} \circ \theta_{\text{C-ECC}}. \]
By inpection, one can also see that this holds for a doubles-only truncation, since $A(\Lambda')=I$ in this case.
We obtain the following result:
\begin{theorem}\label{thm:commute}
    Let $\mathcal{V}_d$ be excitation-rank complete or consist of doubles excitations only.  Then, the discrete solutions $(T_{*d},\Lambda_{*d})$ and $(T_{*d}',\Lambda_{*d}')$ of the NC-ECC($n$) and C-ECC($n$) methods, respectively, are equivalent and related via $\theta_{\text{C-ECC}}$, i.e.,
$T_{*d} = A(\Lambda_{*d}') T_{*d}'$ and $\Lambda_{*d} = \Lambda_{*d}'$. 
\end{theorem}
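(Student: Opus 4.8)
The plan is to deduce the theorem from the functional identity $\mathcal{E}_{\text{C-ECC($n$)},d} = \mathcal{E}_{\text{NC-ECC($n$)},d} \circ \theta_{\text{C-ECC}}$ recorded immediately before the statement, and then to transport critical points across $\theta_{\text{C-ECC}}$ by the chain rule. The only ingredient that must be added is that, under either hypothesis on $\mathcal{V}_d$, the coordinate map restricts to a diffeomorphism of the finite-dimensional Galerkin space $\mathcal{V}_d\oplus\mathcal{V}_d$ onto itself. Notably, none of the monotonicity machinery of the preceding sections is needed: this is an exact statement about the coincidence of discrete stationary points, valid irrespective of whether either functional is locally strongly monotone.

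First I would establish subspace invariance. Writing $\theta_{\text{C-ECC}}(T',\Lambda') = (A(\Lambda')T',\Lambda')$ and using the representation~\eqref{eq:canon-trans-matrix-rep}, the operator $A(\Lambda')$ sends a rank-$k$ cluster operator to a sum of operators of rank $k'\leq k$. Hence, when $\mathcal{V}_d$ is excitation-rank complete, $T'\in\mathcal{V}_d$ forces $A(\Lambda')T'\in\mathcal{V}_d$, and since the second component $\Lambda'$ is left untouched, $\theta_{\text{C-ECC}}$ maps $\mathcal{V}_d\oplus\mathcal{V}_d$ into itself; in the doubles-only case $A(\Lambda')=I$, so $\theta_{\text{C-ECC}}$ acts as the identity there and invariance is immediate. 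Because the same rank-lowering property holds for the inverse map, and $\theta_{\text{C-ECC}}$ is a global diffeomorphism of $\mathcal{V}\oplus\mathcal{V}$ with smooth inverse, its restriction $\theta_{\text{C-ECC}}\restriction_{\mathcal{V}_d\oplus\mathcal{V}_d}$ is a smooth bijection of $\mathcal{V}_d\oplus\mathcal{V}_d$ with smooth inverse. In particular its Jacobian $M(W) = \partial\theta_{\text{C-ECC}}(W)$ is, at every $W\in\mathcal{V}_d\oplus\mathcal{V}_d$, an invertible linear map of $\mathcal{V}_d\oplus\mathcal{V}_d$.

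With invariance secured, this functional identity is an equality of smooth functions on $\mathcal{V}_d\oplus\mathcal{V}_d$, and differentiating gives
\[
\partial\mathcal{E}_{\text{C-ECC($n$)},d}(W) = M(W)^{*}\,\partial\mathcal{E}_{\text{NC-ECC($n$)},d}\big(\theta_{\text{C-ECC}}(W)\big),
\]
where $M(W)^{*}$ is the adjoint of $M(W)$, itself invertible. Consequently the left-hand side vanishes precisely when $\partial\mathcal{E}_{\text{NC-ECC($n$)},d}(\theta_{\text{C-ECC}}(W))=0$, so $W=(T'_{*d},\Lambda'_{*d})$ is a discrete stationary point of the C-ECC($n$) functional if and only if $\theta_{\text{C-ECC}}(W)$ is one of the NC-ECC($n$) functional. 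This sets up a bijection between the two sets of discrete solutions, and evaluating $\theta_{\text{C-ECC}}(T'_{*d},\Lambda'_{*d}) = (A(\Lambda'_{*d})T'_{*d},\Lambda'_{*d})$ produces exactly $T_{*d}=A(\Lambda'_{*d})T'_{*d}$ and $\Lambda_{*d}=\Lambda'_{*d}$. The two distinguished ground-state solutions are matched under this bijection because $\theta_{\text{C-ECC}}$ fixes $\Lambda$ and perturbs $T$ only mildly, preserving proximity to the reference $\phi_0$.

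I expect the one genuine obstacle to be the careful verification of the rank-lowering property of $A(\Lambda')$ and of its inverse in the excitation-rank complete case, i.e., confirming directly from~\eqref{eq:canon-trans-matrix-rep} that no excitation of rank exceeding $k$ is generated; the doubles-only case, where $A(\Lambda')=I$, is trivial. Everything downstream is a one-line chain-rule computation together with the invertibility of the restricted Jacobian, and requires no appeal to local strong monotonicity or to Theorem~\ref{thm:z}.
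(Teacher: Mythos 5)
Your proposal is correct and follows essentially the same route as the paper: the paper derives the theorem directly from the rank-lowering property of $A(\Lambda')$ (and of the inverse map), which gives invariance of the Galerkin space and hence the identity $\mathcal{E}_{\text{C-ECC($n$)},d} = \mathcal{E}_{\text{NC-ECC($n$)},d} \circ \theta_{\text{C-ECC}}$, with the doubles-only case handled by observing $A(\Lambda')=I$. Your additional explicit steps (the chain rule with invertible restricted Jacobian, giving a bijection of discrete stationary points) are exactly what the paper leaves implicit when it states the theorem as an immediate consequence of that identity.
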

We stress that, if $\mathcal{V}_d$ is \emph{not} excitation-rank complete, the canonical and non-canonical parameterizations are not equivalent. This would be the case for the PP hierarchy of truncations~\cite{Byrd2002,Lehtola2016}.

According to the doubly linked structure of the energy functional $\mathcal{E}_{\text{C-ECC($n$)}}$, see the discussion after Eq.~\eqref{eq:ecc-doubly-linked}, the amplitude equations for the canonical case are cheaper, albeit by a small amount. Moreover, it is reasonable to expect that the canonical solution $(T'_{*d},\Lambda'_{*d})$ is more compact compared to $(T_{*d},\Lambda_{*d})$. We investigate this claim numerically in Section~\ref{sec:numerics}.

\section{Numerical Results}
\label{sec:numerics}

\subsection{Implementation}
The (un-)truncated (N)C-ECC($n$)SD equations~\eqref{eq:hierarchy} and~\eqref{eq:ampeq-h}, together with the untruncated coordinate transformation Eq.~\eqref{eq:canon-trans-matrix-rep} have been implemented in a local full CI-based program, i.e., all intermediates are expressed as vectors in the full CI basis. To this end, the C-ECC($n$) amplitudes are computed using transformed NC-ECC($n$) residual expressions~\cite{Evangelista2011}. The C-ECC($n$) amplitude equations are thus:
\begin{subequations}\label{eq:num-amp-eq}
\begin{align}
  \begin{split}
  0&=\sum_{\nu\in\mathcal{I}} \langle \phi_\nu ,  e^{-(\Lambda')^\dagger} \phi_\mu\rangle  \langle \phi_0 , (e^{(\Lambda')^\dagger})_n  [H^S,X_\nu]  \phi_0\rangle, \label{CECCamp1} 
  \end{split}\\
    \begin{split}
0&=  \langle \phi_\mu ,  (e^{(\Lambda')^\dagger})_{n-1}  H^S \phi_0\rangle \\&- \sum_{\nu\in\mathcal{I}}  \langle \phi_\mu , X_\nu^\dagger  e^{-(\Lambda')^\dagger} T' \phi_0\rangle  \langle \phi_0 , (e^{(\Lambda')^\dagger})_n [H^S,X_\nu]  \phi_0\rangle ,
    \label{CECCamp2}
    \end{split}
    \end{align}
    \end{subequations} 
where $H^S = e^{-S(T';\Lambda')} H e^{S(T';\Lambda')}$.  The sparsity of the coordinate transformation Eq.~\eqref{eq:canon-trans-matrix-rep} has been exploited throughout, e.g., only singles amplitudes are transformed in the case where $\mathcal{V}_d$ contains only singles and doubles.
 The coupled amplitude equations~\eqref{eq:num-amp-eq}
 are solved iteratively starting from an MP2-guess, using an alternating scheme and applying DIIS convergence acceleration. In all computations,  residuals and energies were converged to a threshold of $10^{-4}$ and  $10^{-6}$ a.u., respectively. The  (N)C-ECC($1$)SD and (N)C-ECC($\infty$)SD implementations are verified by reproducing the ``CCSD'' and ``ECCSD'' energies presented in~\cite{Evangelista2011}.

\subsection{Numerical Experiments}

The (N)C-ECC($n$)SD and (N)C-ECC($n$)DT models have been studied numerically by investigating the potential energy curves of the hydrogen fluoride molecule with intermolecular distances $1.0 \le R \le 3.5~(a_0)$ in a DZV basis set~\cite{Evangelista2011} as well as the H$_8$ model system with structural parameters $0.0001 \le \alpha \le 1.0$ in a MBS basis set~\cite{Jankowski1985,Adamowicz2000}. 
For large distances $R$ and small $\alpha$, respectively, the systems comprise multireference character, i.e., the weight    of the Hartree--Fock configuration in the full CI wave function,  $|\braket{\phi_0,\psi^\text{FCI}}|^2$,  is fairly small. Thus, these species are good candidates to study novel quantum chemical methods.

The energy curves of the canonical models C-ECC($n$)SD are identical to the non-canonical NC-ECC($n$)SD ones and are thus not presented here. However, the results differ if excitation-rank incomplete truncation schemes are employed, e.g., when orbital optimization is considered. For instance, in canonical ECC($n$)DT, singles amplitudes are effectively generated from doubles and triples amplitudes, while these are absent in the non-canonical model. This effect has been studied on the potential curve of the H--F molecule and is depicted in Fig.~\ref{fig:hf-num-dt}:
The generation of singles amplitudes entails that the canonical computation is lower in energy, in particular towards the multireference region where these contribute significantly to the wave function expansion. This depends, however, on the role of the singles amplitudes in the wave function: In test computations on the H$_8$ model system 
a different trend was observed, consistent with the diminished importance of singles in the wave function (\emph{vide infra}). Therefore, we cannot conclude that the canonical coordinates are consistently better when unconventional truncations are used.

\begin{figure}[htb]
\centering

\includegraphics{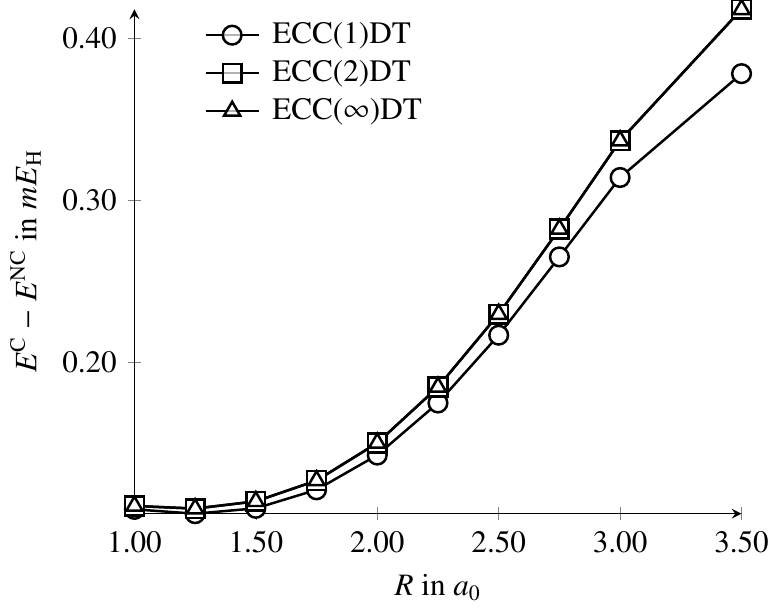}
\caption{Difference between a canonical and a non-canonical ECC($n$)DT computation for the potential curve of H--F.
}
\label{fig:hf-num-dt}
\end{figure}

In order to investigate the effect of using different coordinates in  ECC($n$)SD computations,
we calculated a set of CC diagnostics which are often used to assess the quality of CC computations~\cite{Lee1989}. These are based on the largest singular value ($\mathcal{D}_1$) and Frobenius norm ($\mathcal{T}_1$) of the matrix representation of the singles amplitudes. (Equivalently, $\mathcal{T}_1^2 = \|\tau_{*1}\|_2^2/N$, the sum of the squares of the singles amplitudes, with $N$ the number of correlated electrons.) Though diagnostics based on  doubles amplitudes are preferred, they are not as  available in implementations as are the singles-based variants~\cite{Jiang2012}. Additionally, we computed the diagnostic $||\tau_\ast ||_2^2 /(||\tau_\ast ||_2^2+1)$ which involves all the amplitudes. This choice can be motivated from monotonicity arguments and will be discussed in a forthcoming paper. 

The values have been computed for truncation schemes $n=1,2,\infty$. Since the values are very similar, only the data for $n=1$ is presented. Fig.~\ref{fig:diagn-compare-1} shows the diagnostics correlated with the multireference character for the H--F potential curve, in Fig.~\ref{fig:h8diagn-compare-1} values for the H$_8$ model are shown. In H$_8$, electron correlation is dominated by doubles amplitudes, as can be seen from the small values of the singles based diagnostics.  Since the reparameterization of the amplitudes in the canonical model does not affect the amplitudes of highest excitation rank, the difference between the NC-ECC(1)SD and C-ECC(1)SD amplitude vectors is negligible. This is different for the H--F case. Here, the amplitude norms of the canonical models are consistently smaller than the non-canonical variants, indicating that the wave function parameterization is more compact.

\begin{figure}[htb]
\centering

\includegraphics{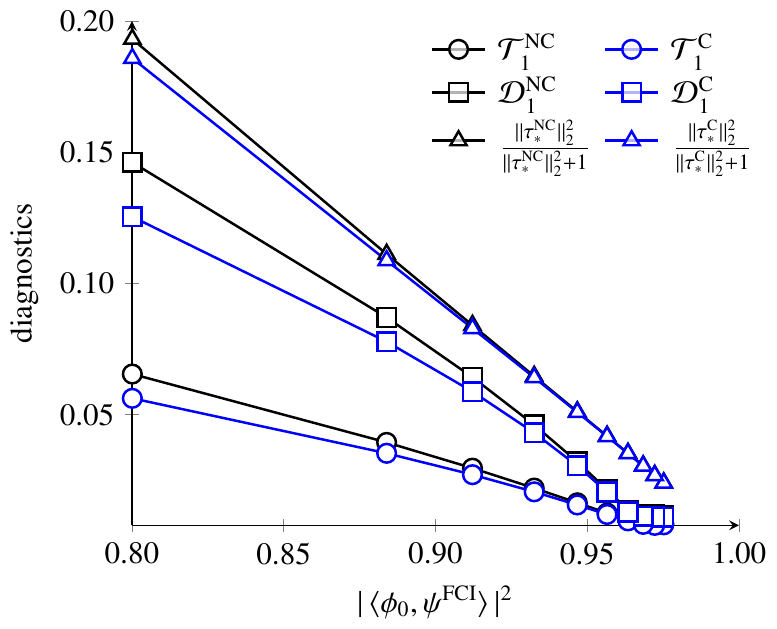}

\caption{Comparison of CC diagnostics of the C-ECC($1$)SD and NC-ECC($1$)SD model for the H--F potential curve correlated with the multireference character.}
\label{fig:diagn-compare-1}
\end{figure}

\begin{figure}[htb]
\centering

\includegraphics{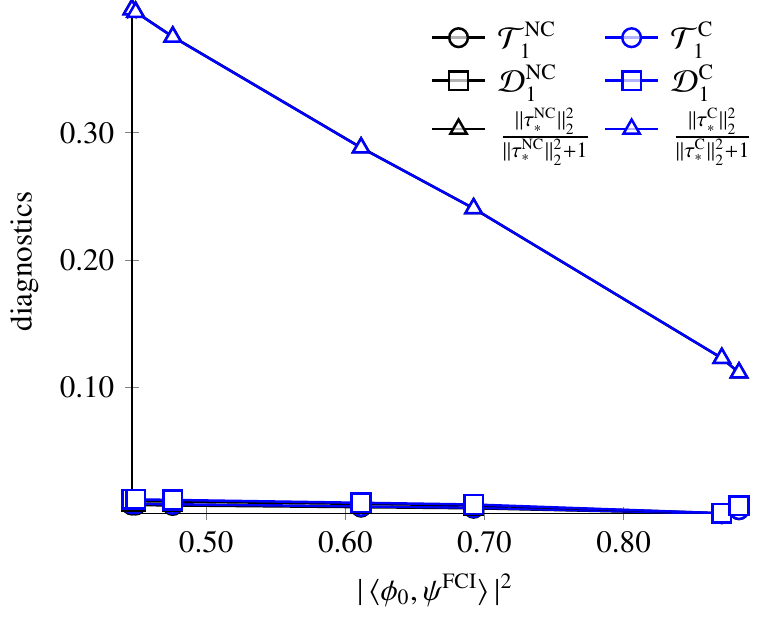}

\caption{Comparison of CC diagnostics of the C-ECC($1$)SD and NC-ECC($1$)SD model for the H$_8$ potential curve correlated with the multireference character.}
\label{fig:h8diagn-compare-1}
\end{figure}

Our numerical experiments suggest, that for excitation-rank incomplete models, the canonical map generates effectively an excitation-rank complete parameterization, but does not necessarily yield significantly better results. Concerning the \emph{a priori} excitation-rank complete models, it has been found that the canonical parameterization can be more compact compared to the non-canonical one, a desired property for post-Hartree--Fock methods.

\section{Concluding remarks}
\label{sec:conclusion}

In this article, we formulated basic error estimates for a class of exact models, defined in terms of replacing, in Arponen's ECC method, the exact exponential $e^{\Lambda^\dag}$ of the dual cluster operator with a finite-order Taylor polynomial, the canonical C-NCC($n$) models and the non-canonical NC-ECC($n$) models. The central result was a coordinate-transformation theorem, Theorem~\ref{thm:map}, that gives error estimates for any method that can be described as a coordinate transformation of ECC theory. Notably, these results guarantee asymptotically quadratic error estimates for the ground-state energy of all models, under certain mild conditions.

Apart from Theorem 3, a basically self-contained mathematical framework for local error analysis of coupled-cluster methods was presented. This was based on Arponen's bivariational principle and basic results from nonlinear monotone operator theory, i.e., Zarantonello's theorem. Also central was our prior analysis of Arponen's extended coupled-cluster method in its noncanonical formulation. 

The methods covered by our analysis include standard CC theory, quadratic CC theory~\cite{VanVorhiis2000a,Byrd2002}, the perfect-pairing hierarchy~\cite{Lehtola2016} for approximating CASSCF, also in its quadratic version~\cite{Byrd2002}, and, Arponen's canonical ECC method.

The error estimates are not optimal for many methods. A direct analysis of canonical ECC would probably provide the most optimistic analysis for all the (N)C-ECC($n$) methods, due to the doubly linked structure and the equivalence of excitation-rank complete Galerkin discretizations.

Finally, we performed some simple numerical experiments, focusing on the possibility of using canonical coordinates in place of the usual CC amplitudes when doing diagnostic estimates on CC calculations on systems with multireference character. Our preliminary findings support the hypothesis that the canonical coordinates are more compact compared to the usual coordinates, providing more accurate diagnostics.

An interesting extension of the present work would be to study truncations where singles-amplitudes are replaced by orbital rotations, either unitary or biorthogonal, as in the QCC and PP approaches, or the non-orthogonal orbital optimized coupled-cluster method of Pedersen and coworkers.~\cite{Pedersen2001}. Moreover, the complete-active space coupled-cluster method by Adamowicz and coworkers~\cite{Adamowicz2000} fits the present scheme. It is also known that quadratic CC and ECC in general are quite good at reproducing multireference character, while standard single-reference CC is quite poor at this. Thus, a modified analysis of the ECC method  that includes multireference assumptions, such as the steerable CAS-ext gap of Ref.~\onlinecite{faulstich2018}, could potentially lead to a deeper understanding of how CC methods generally behave in the presence of static correlation.

\acknowledgments

This work has received funding from the Research Council of Norway (RCN) under
CoE Grant Nos. 287906 and 262695 (Hylleraas Centre for Quantum Molecular Sciences), and from ERC-STG-2014
under grant No. 639508. The authors are thankful to M.~A.~Csirik for useful comments.

\bibliographystyle{unsrt}
\bibliography{references.bib}

\end{document}